
\documentclass[conference,letterpaper]{IEEEtran}

\addtolength{\topmargin}{9mm}

\usepackage[utf8]{inputenc}
\usepackage[T1]{fontenc}
\usepackage{url}
\usepackage{ifthen}
\usepackage{cite}
\usepackage[cmex10]{amsmath} 


\interdisplaylinepenalty=2500 

\usepackage{ieee}
\usepackage[]{flushend} 

\hyphenation{op-tical net-works semi-conduc-tor}

\begin{document}
\title{
Strong Converse Exponent for Remote Lossy Source Coding
}

\author{%
  \IEEEauthorblockN{Han Wu and Hamdi Joudeh}
  \IEEEauthorblockA{Eindhoven University of Technology, The Netherlands \\
                    Email: \{h.wu1, h.joudeh\}@tue.nl }
}

\maketitle


\begin{abstract}
  Past works on remote lossy source coding studied the rate under average distortion and the error exponent of excess distortion probability.
  In this work, we look into how fast the excess distortion probability converges to 1 at small rates, also known as exponential strong converse.
  We characterize its exponent by establishing matched upper and lower bounds.
  From the exponent, we also recover two previous results on lossy source coding and biometric authentication.
\end{abstract}

\section{Introduction}
Let \(X^n\) be a discrete memoryless source that represents some real-world data.
Given a reconstruction alphabet \(\hat{\mathcal{X}}\) and an additive distortion measure \(d(x, \hat{x})\), Shannon \cite{shannonCodingTheoremsDiscrete1959} established the celebrated rate-distortion function \(R(\Delta)\) of lossy source coding: the minimum compression rate such that  \(X^n\) can be asymptotically reconstructed within an average distortion level \(\Delta\).
Following Shannon's work, Dobrushin and Tsybakov \cite{dobrushinInformationTransmissionAdditional1962} introduced remote lossy source coding, where the task remains to reconstruct \(X^n\) but the compressor now only has access to its noisy observation \(Y^n\).
Remote lossy source coding, also known as indirect (or noisy) lossy source coding, is of interest because it incorporates the random nature of measurement taken by electrical equipments, such as fingerprint scanners.
Dobrushin and Tsybakov \cite{dobrushinInformationTransmissionAdditional1962} showed that the rate-distortion function in this setting is the same as Shannon's rate-distortion function for \(Y^n\) under the distortion \(\tilde{d}(y, \hat{x}) = \E[ d(X, \hat{x}) | y ]\).
More works along this direction can be found in, e.g., \cite{wolfTransmissionNoisyInformation1970, bergerRateDistortionTheory1971, witsenhausenIndirectRateDistortion1980}.

Besides the average distortion, another important distortion criterion for lossy source coding is the excess distortion.
Let \(\hat{X}^n\) be the reconstructed source.
The excess distortion concerns the probability of source sequences that are reconstructed beyond distortion level \(\Delta\), i.e., \( \P \{ \bar{d}( X^n, \hat{X}^n) > \Delta \} \), where \(\bar{d}\) denotes the per-letter distortion.
Marton \cite{martonErrorExponentSource1974} established the error exponent of excess distortion probability for lossy source coding.
Weissman and Merhav \cite{weissmanTradeoffsExcesscodelengthExponent2002} characterized the error exponent of excess distortion probability for remote lossy source coding, where they also studied a variable-length codeword variant of the problem.
Later on, Weissman extended these results to the universal source coding case in \cite{weissmanUniversallyAttainableError2004}.

In this work, we are interested in \( \P \{ \bar{d}( X^n, \hat{X}^n) \leq \Delta \} \) for remote lossy source coding.
In particular, we look into how fast it decays to \(0\) at small rates, which serves as the exponential strong converse for remote lossy source coding.
We will characterize the exponent, establish its continuity, and also show that the exponent is strictly positive for all rates below the rate-distortion function.
Two previous results are also recovered, including Csiszár and Körner \cite[Problem 9.6]{csiszarInformationTheoryCoding2011} for lossy source coding and Kang \etal \cite[Theorem 1]{kangDeceptionSideInformation2015} for biometric authentication.
Moreover, our converse proof serves as a simpler alternative proof for \cite[Theorem 1]{kangDeceptionSideInformation2015}.
Proofs in this work are based on the method of types \cite{csiszarInformationTheoryCoding2011}.

Remote lossy source coding has found wide applications since its introduction, and some variants have been considered, e.g., multiterminal \cite{yamamotoSourceCodingTheory1980}, the CEO problem \cite{bergerCEOProblem1996,oohamaRatedistortionFunctionQuadratic1998}, Gaussian \cite{oohamaIndirectDirectGaussian2014,eswaranRemoteSourceCoding2019}.
Nonasymptotic analysis for remote lossy source coding is provided in \cite{kostinaNonasymptoticNoisyLossy2016}, and \(f\)-separable distortion measures are studied in \cite{stavrouIndirectRateDistortion2023}.
Resurgent interests in remote lossy source coding are partly driven by its connection to the information bottleneck problem under logarithmic loss \cite{zaidiInformationBottleneckProblems2020,goldfeldInformationBottleneckProblem2020}.
Another area that received renewed attention is exponential strong converse \cite{arimotoConverseCodingTheorem1973}.
Oohama established the exponential strong converse for several multiterminal source coding problems through the information spectrum method in, e.g., \cite{oohamaExponentialStrongConverse2018,oohamaExponentialStrongConverse2019}.

We briefly mention some notation used throughout the work.
For a finite alphabet \(\mathcal{X}\), let \(\mathcal{P}(\mathcal{X})\) be the set of all pmfs defined on \(\mathcal{X}\).
We write \(x^n\) or \(\bm{x}\) for an \(n\)-length sequence from \(\mathcal{X}^n\).
Random vectors are denoted by \(X^n\) or \(\bm{X}\), which also applies to \(Y^n\) or \(\bm{Y}\).
All alphabets are assumed to be finite.
The type of a sequence \(\bm{x}\) is denoted by \(\hat{P}_{\bm{x}}\), while \(\hat{P}_{\bm{x} \bm{y}}\) is the joint type and \(\hat{P}_{\bm{y} | \bm{x}}\) is the conditional type.
The set of all types is written as \(\mathcal{P}_n(\mathcal{X})\).
A type class \(\mathcal{T}_n(P_X)\) consists of all \(\bm{x}\) satisfying \(\hat{P}_{\bm{x}} = P_X\), while \(\mathcal{T}_n(Q_{Y|X} | \bm{x})\) denotes the conditional type class, i.e., all \(\bm{y}\) satisfying \(\hat{P}_{\bm{y} | \bm{x}} = Q_{Y|X}\).
Let \(a_n \ndot{\geq} b_n\) if \(\liminf_{n \to \infty} \frac{1}{n}\log (a_n/b_n) \geq 0\) and \(a_n \ndot{\leq} b_n\) if \(\limsup_{n\to \infty}\frac{1}{n}\log (a_n/b_n) \leq 0\).
We will write \(a_n \ndot{=} b_n\) if both hold.
Let \(|a|^{+} \triangleq \max \{0, a\}\) and \([N] \triangleq \{1,2,\cdots, N \}\).
The base of exponential and log functions is chosen to be \(2\).

\section{Problem Setup and Main Results}
Consider a pmf \(P_{XY} \in \mathcal{P}( \mathcal{X} \times \mathcal{Y} )\).
Assume that we have a DMS pair \((X^n, Y^n)\) following the distribution
\begin{equation}
    P_{X^n Y^n} ( x^n, y^n) = \prod_{i=1}^{n}P_{XY}(x_i, y_i).
\end{equation}
We can interpret \(X^n\) as a remote source and \(Y^n\) as its noisy observation through the discrete memoryless channel \(P_{Y|X}\).
A helper has access to the noisy observation \(Y^n\) and describes it to a receiver through a rate-limited link \(f_n:\mathcal{Y}^n \to [2^{nR}]\).
Given a forwarded index \(m \in [2^{nR}]\), the receiver tries to reconstruct the source sequence through a decoder \(g_n: [2^{nR}] \to \hat{\mathcal{X}}^n \), where \(\hat{\mathcal{X}}\) is the reconstruction alphabet.
The distortion between a source sequence \(x^n\) and its reconstruction \(\hat{x}_n\) is measured through
\begin{equation}
  \bar{d}(x^n, \hat{x}^n) \triangleq \frac{1}{n} \sum_{i=1}^{n} d(x_i, \hat{x}_i),
\end{equation}
where \(d: \mathcal{X} \times \hat{\mathcal{X}} \to \mathbb{R}^{+} \) is a distortion mapping.

Denote the reconstructed source by \(\hat{X}^n\), i.e., \(\hat{X}_n = g_n( f_n(Y^n) )\).
For a distortion level \(\Delta\), we define
\begin{equation}
    p_{\mathrm{c}} (n, R, \Delta) \triangleq \max_{f_n, g_n} \P \{ \bar{d}(X^n, \hat{X}^n ) \leq \Delta \},
\end{equation}
i.e., the maximal correct reconstruction probability for a given rate-distortion pair \((R, \Delta)\).
Note that throughout the work we assume \(\Delta \geq \Delta_{\min}\) where \(\Delta_{\min} \triangleq \min_{\hat{x}(y)} \E [d(X, \hat{x}(Y)  ]\).
In this work, we are interested in how fast \(p_{\mathrm{c}} (n, R, \Delta)\) decays to 0 as \(n\) grows, captured by
\begin{equation}
    E(R, \Delta) \triangleq \lim_{n \to \infty} -\frac{1}{n} \log p_{\mathrm{c}} (n, R, \Delta).
\end{equation}
We will give a full characterization of \(E(R, \Delta)\) for all pairs \((R, \Delta)\) and also show that \(E(R, \Delta) > 0\) if and only if \(R\) is below the rate-distortion function \(R_{\mathrm{r}}(P_{XY},\Delta)\), where
\begin{equation}
  R_{\mathrm{r}}(P_{XY} , \Delta) \triangleq \! \! \! \! \min_{ \substack{ P_{\hat{X} | Y}: \\ \E[ d(X, \hat{X}) ] \leq \Delta}  } \! \! \! \! I(Y;\hat{X}),
\end{equation}
in which we have the joint distribution \(P_{XY\hat{X}} = P_{XY} P_{ \hat{X} | Y }\).

\subsection{Main Results}
Before stating the main result, we need to define
\begin{equation*}
    E(Q_{XY}, R, \Delta) \triangleq \! \! \! \! \min_{ \substack{ Q_{\hat{X} | X Y}: \\ \E_Q[ d(X, \hat{X}) ] \leq \Delta}  } \! \! \! \! I_Q(X;\hat{X} | Y) + |I_Q(Y;\hat{X}) - R|^{+},
\end{equation*}
where we have the joint distribution \(Q_{XY\hat{X}} = Q_{XY} Q_{ \hat{X} | XY }\).

\begin{theorem}
\label{thm:reliability_function_remote_lossy_source_coding}
Given a DMS pair \(P_{XY}^n\), we have
\begin{equation}
    E(R, \Delta) = \min_{Q_{XY}} D(Q_{XY} \| P_{XY}) + E(Q_{XY}, R, \Delta ). \label{eq:characterization_reliability_function_remote_lossy_source_coding}
\end{equation}
\end{theorem}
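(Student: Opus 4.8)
The plan is to prove \eqref{eq:characterization_reliability_function_remote_lossy_source_coding} by matching a converse bound, valid for every encoder/decoder pair $(f_n,g_n)$, against an achievability bound attained by a well-chosen code: writing $\mathsf{E}$ for the right-hand side of \eqref{eq:characterization_reliability_function_remote_lossy_source_coding}, I aim to show $p_{\mathrm{c}}(n,R,\Delta)\ndot{\leq}2^{-n\mathsf{E}}$ for every $(f_n,g_n)$ and $p_{\mathrm{c}}(n,R,\Delta)\ndot{\geq}2^{-n\mathsf{E}}$ for some code; since the number of types is polynomial in $n$, the two bounds force the limit defining $E(R,\Delta)$ to exist and equal $\mathsf{E}$. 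Both directions go through the method of types, with the organizing step being to condition on the joint type $Q_{XY}$ of the source pair $(X^n,Y^n)$: on a type class $\mathcal{T}_n(Q_{XY})$ the law $P_{X^nY^n}$ is constant, so $\P\{(X^n,Y^n)\in\mathcal{T}_n(Q_{XY})\}\ndot{=}2^{-nD(Q_{XY}\|P_{XY})}$; moreover, conditioned on this class, $Y^n$ is uniform on $\mathcal{T}_n(Q_Y)$ and, given $Y^n=y^n$, $X^n$ is uniform on $\mathcal{T}_n(Q_{X|Y}|y^n)$. The converse then reduces to upper bounding, for each $Q_{XY}$, the conditional probability of correct reconstruction by $2^{-nE(Q_{XY},R,\Delta)+o(n)}$, and the achievability to exhibiting a code whose conditional success probability on $\mathcal{T}_n(Q^\star_{XY})$, for the outer minimizer $Q^\star_{XY}$, is at least $2^{-nE(Q^\star_{XY},R,\Delta)-o(n)}$.

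For the converse, fix $(f_n,g_n)$, write $\hat{x}^n(y^n)\triangleq g_n(f_n(y^n))$, and note the reconstruction codebook $\mathcal{C}=\{g_n(m)\}$ has at most $2^{nR}$ members. I would group the sequences $y^n\in\mathcal{T}_n(Q_Y)$ by the joint type $Q_{Y\hat{X}}$ they form with $\hat{x}^n(y^n)$: a fixed codeword is the reconstruction of at most $|\mathcal{T}_n(Q_{Y|\hat{X}}|\hat{x}^n)|\ndot{=}2^{nH_Q(Y|\hat{X})}$ sequences of type $Q_Y$, so the number of such $y^n$ is $\ndot{\leq}\min\{2^{nH_Q(Y)},2^{n(R+H_Q(Y|\hat{X}))}\}=2^{n(H_Q(Y)-|I_Q(Y;\hat{X})-R|^{+})}$. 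For each such $y^n$, the count of $x^n\in\mathcal{T}_n(Q_{X|Y}|y^n)$ with $\bar{d}(x^n,\hat{x}^n(y^n))\leq\Delta$ is, after a union bound over the joint type $Q_{XY\hat{X}}$ of the triple $(x^n,y^n,\hat{x}^n(y^n))$, at most $2^{nH_Q(X|Y)}\cdot 2^{-n\min I_Q(X;\hat{X}|Y)}$, the minimum running over joint types that extend both $Q_{XY}$ and $Q_{Y\hat{X}}$ and obey $\E_Q[d(X,\hat{X})]\leq\Delta$. Summing over the polynomially many $Q_{Y\hat{X}}$ and normalizing by $|\mathcal{T}_n(Q_Y)|\ndot{=}2^{nH_Q(Y)}$ and $|\mathcal{T}_n(Q_{X|Y}|y^n)|\ndot{=}2^{nH_Q(X|Y)}$ leaves exactly $2^{-nE(Q_{XY},R,\Delta)}$ once the definition of $E(Q_{XY},R,\Delta)$ is matched. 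It is worth stressing that no Markov structure is forced on the optimizing $Q_{\hat{X}|XY}$, since within a conditional type class $X^n$ and the deterministic reconstruction can be arbitrarily correlated.

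For achievability, let $Q^\star_{XY}$ attain the outer minimum in \eqref{eq:characterization_reliability_function_remote_lossy_source_coding} and $Q^\star_{\hat{X}|XY}$ the minimum defining $E(Q^\star_{XY},R,\Delta)$, so $\E_{Q^\star}[d(X,\hat{X})]\leq\Delta$; up to an approximation/continuity step I take the relevant marginals to be $n$-types. Draw $2^{nR}$ codewords independently and uniformly on $\mathcal{T}_n(Q^\star_{\hat{X}})$, let the encoder forward the index of some codeword forming joint type $Q^\star_{Y\hat{X}}$ with the observed $y^n$ (arbitrary when none does), and let the decoder reproduce that codeword. A partial type-covering estimate — a random codeword has joint type $Q^\star_{Y\hat{X}}$ with a fixed $y^n\in\mathcal{T}_n(Q^\star_Y)$ with probability $\ndot{=}2^{-nI_{Q^\star}(Y;\hat{X})}$, so with $2^{nR}$ codewords the expected covered fraction of $\mathcal{T}_n(Q^\star_Y)$ is $\ndot{\geq}2^{-n|I_{Q^\star}(Y;\hat{X})-R|^{+}}$ — lets me fix a codebook attaining that fraction. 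Conditioning on $(X^n,Y^n)\in\mathcal{T}_n(Q^\star_{XY})$, which has probability $\ndot{=}2^{-nD(Q^\star_{XY}\|P_{XY})}$ and makes $Y^n$ uniform on $\mathcal{T}_n(Q^\star_Y)$, for every covered $y^n$ a fraction $\ndot{=}2^{-nI_{Q^\star}(X;\hat{X}|Y)}$ of $x^n\in\mathcal{T}_n(Q^\star_{X|Y}|y^n)$ additionally forms joint type $Q^\star_{XY\hat{X}}$ with $(y^n,\hat{x}^n(y^n))$, and every such $x^n$ satisfies $\bar{d}(x^n,\hat{x}^n(y^n))=\E_{Q^\star}[d(X,\hat{X})]\leq\Delta$. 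Multiplying the three exponential factors gives $p_{\mathrm{c}}(n,R,\Delta)\ndot{\geq}2^{-n(D(Q^\star_{XY}\|P_{XY})+E(Q^\star_{XY},R,\Delta))}=2^{-n\mathsf{E}}$.

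I expect the main obstacle to be the achievability side: making the partial type-covering estimate quantitatively sharp with exponent precisely $|I_{Q^\star}(Y;\hat{X})-R|^{+}$ — especially when $R<I_{Q^\star}(Y;\hat{X})$, where only an exponentially small fraction of $\mathcal{T}_n(Q^\star_Y)$ is coverable and an averaging argument over the random codebook is needed — and then discharging the assumption that $Q^\star_{XY}$ and its marginals are $n$-types, which requires threading a continuity argument through the distortion constraint $\E_{Q^\star}[d(X,\hat{X})]\leq\Delta$ (most delicate when the optimizer meets it with equality) and through the mutual-information terms. The converse is more mechanical: the nested grouping of $y^n$ by its joint type with its reconstruction, together with the union bound over $Q_{XY\hat{X}}$, is intricate but standard once the conditioning on $Q_{XY}$ is in place.
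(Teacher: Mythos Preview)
Your converse is essentially the paper's: both partition $\mathcal{T}_n(Q_Y)$ according to the joint type $Q_{Y\hat X}$ that $y^n$ forms with its reconstruction, bound the size of each cell by $\min\{2^{nH_Q(Y)},\,2^{n(R+H_Q(Y|\hat X))}\}$, and then count the $x^n$ meeting the distortion constraint via a union bound over $Q_{X|Y\hat X}$. The only cosmetic difference is that you condition on the joint type $Q_{XY}$ up front, whereas the paper first fixes $Q_Y$ and only later regroups via the identity $D(Q_Y\|P_Y)+D(Q_{X|Y\hat X}\|P_{X|Y}\mid Q_{Y\hat X})=D(Q_{XY}\|P_{XY})+I_Q(X;\hat X\mid Y)$; the two are equivalent reparametrizations.

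The achievability is where you genuinely diverge. The paper does not use random coding: for every type $Q_Y$ it invokes the deterministic type-covering lemma to get a covering codebook of size $M_n\ndot{=}2^{nI_Q(Y;\hat X)}$, orders the induced partition cells $\mathcal{F}_1,\ldots,\mathcal{F}_{M_n}$ by decreasing size, and when $I_Q(Y;\hat X)>R$ keeps only the first $2^{nR}$ cells. The key inequality is the averaging trick $\tfrac{1}{2^{nR}}\sum_{i\le 2^{nR}}|\mathcal F_i|\ge \tfrac{1}{M_n}\sum_{i\le M_n}|\mathcal F_i|$, which immediately gives the retained fraction $\ndot{\ge}2^{-n|I_Q(Y;\hat X)-R|^{+}}$. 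Your route draws $2^{nR}$ codewords i.i.d.\ on $\mathcal{T}_n(Q^\star_{\hat X})$ and uses the standard $1-(1-p)^M\ndot{=}\min\{1,Mp\}$ estimate to get the same partial-covering exponent; this is correct and arguably more elementary, but it targets only the single minimizing type $Q^\star_{XY}$, which is exactly why you are forced into the type-approximation/continuity patch you flag. The paper's construction sidesteps that worry by building a code for \emph{every} $Q_Y$ simultaneously (at the cost of a polynomial type index), so the achievability bound is automatically a maximum over all $n$-types and the passage to the continuous minimum is just density of types plus continuity of the objective, with no optimizer to approximate. Either approach closes the theorem; the paper's trades a slightly less familiar covering/averaging step for not having to isolate and perturb the extremizer.
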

\begin{proof}
  See Section \ref{sec:achievability} and Section \ref{sec:converse}.
\end{proof}

We next look into the continuity of \(E(R, \Delta)\) and also the  rate regime for which \(E(R, \Delta)>0\).
\begin{proposition}
\label{prop:continuity_and_rate_regime}
For a fixed \(\Delta\), \(E(R, \Delta)\) is a continuous function of \(R\) and \(E(R, \Delta) > 0\) for every rate \(R < R_{\mathrm{r}}(P_{XY}, \Delta)\).
\end{proposition}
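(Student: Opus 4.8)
The plan is to handle the two assertions separately, both starting from the variational formula~\eqref{eq:characterization_reliability_function_remote_lossy_source_coding}. It is convenient to collapse the nested minimization and write
\[
\phi_R(Q_{XY}, Q_{\hat X|XY}) \triangleq D(Q_{XY}\|P_{XY}) + I_Q(X;\hat X|Y) + |I_Q(Y;\hat X) - R|^{+},
\]
so that \(E(R,\Delta)\) equals the minimum of \(\phi_R\) over the set \(\mathcal{F}(\Delta)\) of pairs \((Q_{XY},Q_{\hat X|XY})\) obeying \(\E_Q[d(X,\hat X)]\le\Delta\); \(D(Q_{XY}\|P_{XY})\) does not involve \(Q_{\hat X|XY}\), so this is a faithful rewriting. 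Two features do the work: \(\mathcal{F}(\Delta)\) does not depend on \(R\), and it is nonempty — indeed \(Q_{XY}=P_{XY}\) together with the deterministic reconstruction \(\hat x^{*}(Y)\) attaining \(\Delta_{\min}\le\Delta\) lies in \(\mathcal{F}(\Delta)\) and yields \(E(R,\Delta)\le|H(\hat X)-R|^{+}<\infty\).

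For continuity, I would observe that for each fixed pair the map \(R\mapsto\phi_R(Q_{XY},Q_{\hat X|XY})\) is \(1\)-Lipschitz, since \(\bigl||a-R|^{+}-|a-R'|^{+}\bigr|\le|R-R'|\) for every \(a\). Because \(\mathcal{F}(\Delta)\) is the same set for all \(R\), taking the infimum over \(\mathcal{F}(\Delta)\) preserves this bound, giving \(|E(R,\Delta)-E(R',\Delta)|\le|R-R'|\); hence \(E(\cdot,\Delta)\) is \(1\)-Lipschitz, in particular continuous (and, incidentally, nonincreasing).

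For positivity I would first establish a pointwise statement and then pass to the minimum. Since \(\phi_R\ge 0\), if \(\phi_R(Q_{XY},Q_{\hat X|XY})=0\) then simultaneously \(D(Q_{XY}\|P_{XY})=0\), \(I_Q(X;\hat X|Y)=0\) and \(I_Q(Y;\hat X)\le R\); the first two force \(Q_{XY}=P_{XY}\) and the Markov chain \(X - Y - \hat X\), so \(Q_{\hat X|XY}\) reduces to a channel \(P_{\hat X|Y}\) which, being feasible for the minimization defining \(R_{\mathrm{r}}(P_{XY},\Delta)\), gives \(R\ge I_Q(Y;\hat X)\ge R_{\mathrm{r}}(P_{XY},\Delta)\). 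Thus for \(R<R_{\mathrm{r}}(P_{XY},\Delta)\) we have \(\phi_R>0\) everywhere on \(\mathcal{F}(\Delta)\). To conclude \(E(R,\Delta)>0\), suppose not and take \((Q^{(k)}_{XY},Q^{(k)}_{\hat X|XY})\in\mathcal{F}(\Delta)\) with \(\phi_R\to 0\); the joint laws \(Q^{(k)}_{XY}Q^{(k)}_{\hat X|XY}\) lie in the compact simplex on \(\mathcal{X}\times\mathcal{Y}\times\hat{\mathcal{X}}\), so a subsequence converges to some \(Q^{*}_{XY\hat X}=Q^{*}_{XY}Q^{*}_{\hat X|XY}\). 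As \(d\) is bounded the distortion constraint passes to the limit, so this limit lies in \(\mathcal{F}(\Delta)\); and since the (conditional) mutual informations are continuous and \(D(\cdot\|P_{XY})\) is lower semicontinuous on the simplex, \(\phi_R(Q^{*}_{XY},Q^{*}_{\hat X|XY})\le\liminf_k\phi_R=0\), contradicting \(\phi_R>0\) on \(\mathcal{F}(\Delta)\).

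I expect the last step to be the only real obstacle: because the inner feasible set \(\{Q_{\hat X|XY}:\E_Q[d(X,\hat X)]\le\Delta\}\) moves with \(Q_{XY}\), one cannot directly invoke lower semicontinuity of \(Q_{XY}\mapsto E(Q_{XY},R,\Delta)\), and must instead take limits at the level of joint distributions on \(\mathcal{X}\times\mathcal{Y}\times\hat{\mathcal{X}}\), where both the distortion constraint and the objective behave well. Everything else is routine; I would also note in passing that \(E(R,\Delta)=0\) for \(R\ge R_{\mathrm{r}}(P_{XY},\Delta)\), obtained by plugging \(Q_{XY}=P_{XY}\) and an \(R_{\mathrm{r}}\)-optimal \(P_{\hat X|Y}\) into~\eqref{eq:characterization_reliability_function_remote_lossy_source_coding}, although this is not part of the stated proposition.
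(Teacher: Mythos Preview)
Your proposal is correct. For continuity, your argument and the paper's are essentially the same: both establish the $1$-Lipschitz bound $|E(R,\Delta)-E(R',\Delta)|\le|R-R'|$ by exploiting that the only $R$-dependence is through $|\cdot-R|^{+}$ and that infima preserve uniform Lipschitz constants; the paper formalizes the latter step via a separate lemma and first rewrites $|a|^{+}=\max_{\rho\in[0,1]}\rho a$, but the content is identical. For strict positivity the routes genuinely diverge. The paper stays with the $\rho$-parametrization and proceeds algebraically: $E(R,\Delta)>0$ is rewritten as the existence, for every feasible $(Q_{XY},Q_{\hat X|XY})$, of some $\rho\in[0,1]$ with $D(Q_{XY}\|P_{XY})+I_Q(X;\hat X|Y)+\rho(I_Q(Y;\hat X)-R)>0$, which rearranges to $R<I_Q(Y;\hat X)+\rho^{-1}\bigl(D(Q_{XY}\|P_{XY})+I_Q(X;\hat X|Y)\bigr)$; minimizing over the feasible pairs and maximizing over $\rho$ then collapses (through $\rho\to 0$) exactly to $R_{\mathrm r}(P_{XY},\Delta)$. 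Your approach instead shows $\phi_R=0$ pointwise forces $Q_{XY}=P_{XY}$, the Markov chain $X\!-\!Y\!-\!\hat X$, and $R\ge R_{\mathrm r}$, and then closes with a compactness/lower-semicontinuity argument at the level of joint laws on $\mathcal{X}\times\mathcal{Y}\times\hat{\mathcal{X}}$. Both are valid; the paper's computation is slicker and sidesteps any limiting argument, while your route is more elementary and, as you note, cleanly handles the issue of the inner feasible set varying with $Q_{XY}$.
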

\begin{proof}
  See Section \ref{sec:rate_regime}.
\end{proof}
In \cite{kangDeceptionSideInformation2015}, Kang \emph{et al.} investigated a deception problem in the context of biometric authentication, which is equivalent to \(R \geq \log |\hat{\mathcal{X}}|\) in the setting of remote lossy source coding.
Hence we can recover their result \cite[Theorem 1]{kangDeceptionSideInformation2015} from Theorem \ref{thm:reliability_function_remote_lossy_source_coding}.
For this purpose, define the conditional rate-distortion function as
\begin{equation}
    R_{\mathrm{c}}(Q_{XY}, \Delta) \triangleq \! \! \! \! \min_{ \substack{ Q_{\hat{X} | X Y}: \\ \E_Q[ d(X, \hat{X}) ] \leq \Delta}  } \! \! \! \! I_Q(X;\hat{X} | Y),
\end{equation}
where we have the joint distribution \(Q_{XY\hat{X}} = Q_{XY} Q_{ \hat{X} | XY }\).
\begin{corollary}
  \label{cor:no_compression}
For all \(R \geq \log |\hat{\mathcal{X}}|\), we have
\begin{equation}
    E(R, \Delta) = \min_{Q_{XY}} D(Q_{XY} \| P_{XY}) + R_{\mathrm{c}}(Q_{XY}, \Delta).
\end{equation}
\end{corollary}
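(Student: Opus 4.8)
The plan is to deduce the corollary directly from Theorem~\ref{thm:reliability_function_remote_lossy_source_coding} by showing that, once $R \geq \log|\hat{\mathcal{X}}|$, the penalty term $|I_Q(Y;\hat{X}) - R|^{+}$ appearing in the definition of $E(Q_{XY}, R, \Delta)$ vanishes for \emph{every} feasible $Q_{\hat{X}|XY}$. First I would note that for any joint law $Q_{XY\hat{X}} = Q_{XY} Q_{\hat{X}|XY}$ one has $I_Q(Y;\hat{X}) \leq H_Q(\hat{X}) \leq \log|\hat{\mathcal{X}}|$, the last inequality because the entropy of a random variable supported on $\hat{\mathcal{X}}$ is at most $\log|\hat{\mathcal{X}}|$. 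Hence $I_Q(Y;\hat{X}) - R \leq \log|\hat{\mathcal{X}}| - R \leq 0$, so $|I_Q(Y;\hat{X}) - R|^{+} = 0$.

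Consequently, for $R \geq \log|\hat{\mathcal{X}}|$ the inner optimization collapses to
\begin{equation*}
  E(Q_{XY}, R, \Delta) = \min_{\substack{Q_{\hat{X}|XY}:\\ \E_Q[d(X,\hat{X})] \leq \Delta}} I_Q(X;\hat{X}|Y) = R_{\mathrm{c}}(Q_{XY}, \Delta),
\end{equation*}
which is precisely the conditional rate-distortion function. Substituting this identity into \eqref{eq:characterization_reliability_function_remote_lossy_source_coding} gives
\begin{equation*}
  E(R, \Delta) = \min_{Q_{XY}} D(Q_{XY} \| P_{XY}) + R_{\mathrm{c}}(Q_{XY}, \Delta),
\end{equation*}
as claimed.

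Every step here is an elementary inequality or a definitional rewriting, so I do not expect a genuine obstacle; the only point needing a moment of care is that $I_Q(Y;\hat{X}) \leq \log|\hat{\mathcal{X}}|$ must be invoked uniformly over all feasible $Q_{\hat{X}|XY}$, so that the reduction of $E(Q_{XY}, R, \Delta)$ to $R_{\mathrm{c}}(Q_{XY}, \Delta)$ is performed \emph{before} taking the inner minimum rather than after. It is also worth remarking in passing that $R \geq \log|\hat{\mathcal{X}}|$ is exactly the regime in which the rate-limited link imposes no effective constraint, which is why this case matches the biometric-authentication setting of Kang \emph{et al.} \cite[Theorem 1]{kangDeceptionSideInformation2015}.
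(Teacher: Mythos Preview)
Your proof is correct and matches the paper's own one-line argument: the paper simply notes that $E(Q_{XY}, R, \Delta) = R_{\mathrm{c}}(Q_{XY}, \Delta)$ whenever $R \geq \log|\hat{\mathcal{X}}|$, and your derivation spells out exactly why---namely $I_Q(Y;\hat{X}) \leq H_Q(\hat{X}) \leq \log|\hat{\mathcal{X}}| \leq R$, so the positive-part term vanishes uniformly over feasible $Q_{\hat{X}|XY}$.
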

Corollary \ref{cor:no_compression} is proved by noting that \(E(Q_{XY}, R, \Delta ) = R_{\mathrm{c}}(Q_{XY}, \Delta)\) when \(R \geq \log |\hat{\mathcal{X}}|\).
The converse proof provided in \cite[Theorem 1]{kangDeceptionSideInformation2015} involves a sophisticated codebook construction as well as the blowing-up lemma.
Our converse proof in Section \ref{sec:converse} exploits the method of types and serves as a simpler alternative converse proof for \cite[Theorem 1]{kangDeceptionSideInformation2015}.

If \(Y^n\) is a noiseless observation of \(X^n\), i.e., \(P_{XY}\) satisfies \(\P\{ X = Y \} =1\), then remote lossy source coding is reduced to the standard lossy source coding problem.
Hence, we can also recover \cite[Problem 9.6]{csiszarInformationTheoryCoding2011}  from Theorem \ref{thm:reliability_function_remote_lossy_source_coding}.
To this end, define the standard rate-distortion function as
\begin{equation}
    R(Q_{X}, \Delta) \triangleq \! \! \! \! \min_{ \substack{ Q_{\hat{X} | X}: \\ \E_Q[ d(X, \hat{X}) ] \leq \Delta}  } \! \! \! \! I_Q(X;\hat{X}).
\end{equation}
\begin{corollary}
    If \(P_{XY}\) satisfies \(\P\{ X = Y \} =1\), then we have
    \begin{equation}
      E(R, \Delta) = \min_{Q_{X}} D(Q_{X} \| P_{X}) + | R(Q_X, \Delta) - R |^{+}.
    \end{equation}
\end{corollary}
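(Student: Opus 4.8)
The plan is to derive this corollary from Theorem~\ref{thm:reliability_function_remote_lossy_source_coding} by specializing the general exponent formula to the noiseless case \(\P\{X=Y\}=1\). Under this assumption, any pmf \(Q_{XY}\) that places mass away from the diagonal contributes \(D(Q_{XY}\|P_{XY})=\infty\), so the outer minimization in \eqref{eq:characterization_reliability_function_remote_lossy_source_coding} is effectively restricted to joint pmfs supported on the diagonal, i.e.\ pmfs of the form \(Q_{XY}(x,y)=Q_X(x)\mathbbm{1}\{x=y\}\). For such \(Q_{XY}\) we have \(D(Q_{XY}\|P_{XY})=D(Q_X\|P_X)\), which recovers the first term of the claimed expression. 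The remaining work is to show that for a diagonal \(Q_{XY}\),
\begin{equation}
  E(Q_{XY}, R, \Delta) = |R(Q_X,\Delta) - R|^{+}. \label{eq:diag_reduction}
\end{equation}

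First I would observe that when \(X=Y\) almost surely under \(Q\), conditioning on \(Y\) is the same as conditioning on \(X\), so for any test channel \(Q_{\hat X|XY}\) the Markov structure collapses: \(I_Q(X;\hat X\mid Y)=0\) because \(\hat X\) and \(X\) are conditionally independent given \(Y=X\) is degenerate — more precisely, \(H_Q(X\mid Y)=0\) forces \(I_Q(X;\hat X\mid Y)\le H_Q(X\mid Y)=0\). Likewise \(I_Q(Y;\hat X)=I_Q(X;\hat X)\), and the distortion constraint \(\E_Q[d(X,\hat X)]\le\Delta\) is unchanged. Hence the definition of \(E(Q_{XY},R,\Delta)\) becomes
\begin{equation*}
  E(Q_{XY},R,\Delta) = \min_{\substack{Q_{\hat X|X}:\\ \E_Q[d(X,\hat X)]\le\Delta}} |I_Q(X;\hat X) - R|^{+}.
\end{equation*}
Since \(t\mapsto |t-R|^{+}\) is nondecreasing in \(t\), the minimizing test channel is the one achieving the smallest \(I_Q(X;\hat X)\) subject to the distortion constraint, namely the one attaining \(R(Q_X,\Delta)\); pushing the minimum inside \(|\cdot|^{+}\) then yields \eqref{eq:diag_reduction}. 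Substituting back gives the corollary.

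I do not expect a serious obstacle here; the only point needing a little care is the handling of the degenerate conditioning — one should justify cleanly that \(Q_{XY}\) diagonal implies both \(I_Q(X;\hat X\mid Y)=0\) and \(I_Q(Y;\hat X)=I_Q(X;\hat X)\), and that the feasible set of test channels is in natural correspondence with that defining \(R(Q_X,\Delta)\). A secondary point is to confirm that restricting the outer minimization to diagonal \(Q_{XY}\) is legitimate, i.e.\ that non-diagonal \(Q_{XY}\) genuinely give \(+\infty\) (which holds because \(P_{XY}\) itself is supported on the diagonal, so \(Q_{XY}\not\ll P_{XY}\) whenever \(Q_{XY}\) charges an off-diagonal pair), and that at least one diagonal \(Q_{XY}\) is feasible so the minimum is finite. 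Both are immediate once spelled out.
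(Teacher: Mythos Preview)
Your proposal is correct and follows essentially the same route as the paper's proof: restrict the outer minimum to joint laws with \(D(Q_{XY}\|P_{XY})<\infty\) (hence diagonal), observe that under such \(Q_{XY}\) the term \(I_Q(X;\hat X\mid Y)\) vanishes and \(I_Q(Y;\hat X)=I_Q(X;\hat X)\), and then push the minimization over test channels inside \(|\cdot|^{+}\) by monotonicity. The paper's argument is terser and leaves the absolute-continuity step implicit, whereas you spell it out; otherwise the two proofs coincide.
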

\begin{proof}
If \(\P\{ X = Y \} =1\), then the minimization over \(Q_{\hat{X} | X Y}\) in  \(E(Q_{XY}, R, \Delta)\) is  reduced to the minimization over \(Q_{\hat{X} | X}\).
Moreover, we also have \( I(X; \hat{X} | Y) = 0\) in this case.
The proof is completed after  passing the minimization over \(Q_{\hat{X} | X}\) into the term \( | I_Q( Y ; \hat{X}) - R |^{+}\) in \(E(Q_{XY}, R, \Delta)\).
\end{proof}

\section{Achievability}
\label{sec:achievability}
In this section, we show that for remote lossy source coding there exists a sequence of coding schemes \((f_n, g_n)\) such that
\begin{equation}
  \P \{ \bar{d}(X^n, \hat{X}^n ) \leq \Delta \} \ndot{\geq} 2^{-nE(R, \Delta)},
\end{equation}
where \(E(R, \Delta)\) is as stated in \eqref{eq:characterization_reliability_function_remote_lossy_source_coding}.
Before starting, we first state the well-know type covering lemma, which is originally due to Berger \cite{bergerRateDistortionTheory1971}. The version we present here appears in other literature, e.g., \cite[Lemma 3.34]{moserAdvancedTopicsInformation2022a}.
\begin{lemma}
    \label{lem:type_covering_lemma}
    For every joint type \(Q_{Y \hat{X}}\), there exists a subset \(\mathcal{A}_n \subseteq \mathcal{T}_n( Q_{\hat{X}} )\) with
    \begin{equation}
      \abs{\mathcal{A}_n} \ndot{\leq} 2^{nI(Q_{Y}, Q_{\hat{X} | Y})}
    \end{equation}
    such that for every \(\bm{y} \in \mathcal{T}_n(Q_Y)\) we can find a \(\hat{\bm{x}} \in \mathcal{A}_n\) satisfying \(\hat{P}_{ \bm{y} \hat{\bm{x}}} = Q_{Y \hat{X}}\).
\end{lemma}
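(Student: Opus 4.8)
The plan is to prove Lemma~\ref{lem:type_covering_lemma} by a standard random covering argument. Fix the joint type $Q_{Y\hat{X}}$ and denote by $Q_Y$, $Q_{\hat{X}}$ its marginals and by $Q_{\hat{X}|Y}$ the induced conditional type; all information quantities below are evaluated under $Q_{Y\hat{X}}$, so in this notation $I(Q_Y, Q_{\hat{X}|Y}) = I_Q(Y;\hat{X})$. The elementary observation is that for a fixed $\bm{y} \in \mathcal{T}_n(Q_Y)$, a sequence $\hat{\bm{x}}$ satisfies $\hat{P}_{\bm{y}\hat{\bm{x}}} = Q_{Y\hat{X}}$ if and only if $\hat{\bm{x}} \in \mathcal{T}_n(Q_{\hat{X}|Y} \,|\, \bm{y})$, and every such $\hat{\bm{x}}$ automatically has type $Q_{\hat{X}}$ and therefore lies in $\mathcal{T}_n(Q_{\hat{X}})$; moreover this conditional type class is nonempty because $Q_{Y\hat{X}}$ is a genuine $n$-type. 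Hence ``covering $\bm{y}$'' means exactly having a codeword inside $\mathcal{T}_n(Q_{\hat{X}|Y} \,|\, \bm{y})$.

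Next I would draw codewords $\hat{\bm{X}}_1, \dots, \hat{\bm{X}}_L$ independently and uniformly at random from $\mathcal{T}_n(Q_{\hat{X}})$. By permutation invariance of type classes, the probability $p$ that a given $\bm{y} \in \mathcal{T}_n(Q_Y)$ is covered by one fixed $\hat{\bm{X}}_j$ does not depend on $\bm{y}$ and equals $|\mathcal{T}_n(Q_{\hat{X}|Y}\,|\,\bm{y})| / |\mathcal{T}_n(Q_{\hat{X}})|$. The usual type-size bounds $|\mathcal{T}_n(Q_{\hat{X}|Y}\,|\,\bm{y})| \ge (n+1)^{-|\mathcal{Y}||\hat{\mathcal{X}}|}\, 2^{nH_Q(\hat{X}|Y)}$ and $|\mathcal{T}_n(Q_{\hat{X}})| \le 2^{nH_Q(\hat{X})}$ then give $p \ge (n+1)^{-|\mathcal{Y}||\hat{\mathcal{X}}|}\, 2^{-nI_Q(Y;\hat{X})}$, so the probability that $\bm{y}$ is covered by no codeword is $(1-p)^L \le e^{-pL}$. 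A union bound over the at most $2^{nH_Q(Y)}$ sequences in $\mathcal{T}_n(Q_Y)$ shows that the probability that some $\bm{y}$ remains uncovered is at most $2^{nH_Q(Y)} e^{-pL}$.

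It then suffices to choose $L$ equal to $2^{nI_Q(Y;\hat{X})}$ times a polynomial-in-$n$ factor large enough to dominate both $(n+1)^{|\mathcal{Y}||\hat{\mathcal{X}}|}$ and $nH_Q(Y)\ln 2$, which makes the last bound strictly less than $1$ for all large $n$; by the probabilistic method there is then a realization $\mathcal{A}_n = \{\hat{\bm{x}}_1,\dots,\hat{\bm{x}}_L\} \subseteq \mathcal{T}_n(Q_{\hat{X}})$ covering every $\bm{y}\in\mathcal{T}_n(Q_Y)$, and since the polynomial factor is subexponential, $|\mathcal{A}_n| = L \ndot{\leq} 2^{nI(Q_Y, Q_{\hat{X}|Y})}$, as claimed. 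I do not expect a real obstacle here, since the statement is classical; the only points requiring some care are checking that the single-codeword hit probability is indeed $\bm{y}$-independent and correctly lower bounded via the type-counting estimates, and verifying that the polynomial-in-$n$ factors are absorbed on the exponential scale so that the $\ndot{\leq}$ bound --- which only constrains the $\limsup$ --- goes through.
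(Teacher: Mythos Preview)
Your random-covering argument is correct and is exactly the standard proof of the type covering lemma; each step---the identification of ``covering $\bm{y}$'' with $\hat{\bm{x}}\in\mathcal{T}_n(Q_{\hat{X}|Y}\,|\,\bm{y})$, the type-counting lower bound on the single-draw hit probability, the union bound, and the absorption of polynomial factors into the $\ndot{\leq}$ relation---goes through as you describe. Note, however, that the paper does not give its own proof of this lemma: it states the result as well known, attributing it to Berger \cite{bergerRateDistortionTheory1971} and referring to \cite[Lemma~3.34]{moserAdvancedTopicsInformation2022a} for the present formulation, so there is no in-paper argument to compare against.
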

The encoder and decoder operate on a type-per-type basis with respect to the observed sequence \(\bm{y}\), accomplished by sending an additional index indicating the type \(\hat{P}_{\bm{y}}\) of \(\bm{y}\).
Since there are at most \((n+1)^{|\mathcal{Y}|}\) types, adding this type index does not break the rate-limit \(R\) asymptotically.
Moreover, we allow the encoder to send one more index \(i=0\) besides \(i \in [2^{nR}]\) (whose purpose will become clear further on).
In a similar manner, this does not break the rate-limit \(R\) either.
Bearing these and the type covering lemma in mind, now we are in a position to present the coding scheme.

\subsection{Coding Scheme}
For every \(Q_Y \in \mathcal{P}_n(\mathcal{Y})\), we select a conditional type \(Q_{\hat{X} | Y}\).
Note that \(Q_{\hat{X} | Y}\) can vary for different \(Q_Y\).
From Lemma \ref{lem:type_covering_lemma}, we find a codebook \(\mathcal{A}_n = ( \hat{\bm{x}}_1, \hat{\bm{x}}_2, \cdots )\) consisting of \(M_n \ndot{=} 2^{nI(Q_{Y}, Q_{\hat{X} | Y})}\) codewords such that \(\mathcal{A}_n\) covers the type class \(\mathcal{T}_n(Q_Y)\) under \(Q_{\hat{X} | Y}\).
According to \(\mathcal{A}_n\), we partition \(\mathcal{T}_n(Q_Y)\) into \(M_n\) subregions \(\mathcal{F}_1, \mathcal{F}_2, \cdots, \mathcal{F}_{M_n}\) such that for every \(i \in [M_n]\), the codeword \(\hat{\bm{x}}_i\) and set \(\mathcal{F}_i\) satisfy
\begin{equation*}
    \hat{P}_{\bm{y} \hat{\bm{x}}_i } = Q_{Y \hat{X}}, \qquad \forall \bm{y} \in \mathcal{F}_i.
\end{equation*}
Further, assume without loss of generality that the sizes of \(\mathcal{F}_i\) are decreasing, i.e.,
\begin{equation}
    |\mathcal{F}_1| \geq |\mathcal{F}_2|  \geq \cdots \geq |\mathcal{F}_{M_n}|. \label{eq:achievability/size_decreasing}
\end{equation}
Depending on whether \(Q_{\hat{X} | Y}\) satisfies \(I(Q_Y, Q_{\hat{X} | Y}) \leq R\) or not, the coding scheme proceeds as follows.
\begin{enumerate}
    \item Assume \(I(Q_Y, Q_{\hat{X} | Y}) \leq R\).
    For every \(\bm{y} \in \mathcal{T}_n(Q_Y)\), if \(\bm{y} \in \mathcal{F}_i\), then the encoder sends \(i\), and the decoder reconstructs \(\hat{\bm{x}}_i\) according to \(\mathcal{A}_n\).
    \item Assume \(I(Q_Y, Q_{\hat{X} | Y}) > R\).
    For every \(\bm{y} \in \mathcal{T}_n(Q_Y)\), if \(\bm{y} \in \mathcal{F}_i\) where \(i \leq 2^{nR}\), then the encoder sends \(i\) and the decoder reconstructs \(\hat{\bm{x}}_i\) according to \(\mathcal{A}_n\).
    However, if \(\bm{y} \in \mathcal{F}_i\) where \(i > 2^{nR}\), then the encoder sends \(i=0\) and the decoder reconstructs an arbitrary sequence.
\end{enumerate}
The above procedures are repeated for every type \(Q_Y\).

\subsection{Exponent Analysis}
For a type class \(\mathcal{T}_n(Q_Y)\), if the selected \(Q_{\hat{X} | Y}\) results in \(I(Q_{Y}, Q_{\hat{X} | Y}) > R\), then there exist some \(\bm{y} \in \mathcal{T}_n(Q_Y)\) such that \(\bm{y} \in \mathcal{F}_i\) where \(i > 2^{nR}\), and hence \(f_n(\bm{y}) = 0\).
Since we are looking at \(\P \{ \bar{d}(X^n, \hat{X}^n ) \leq \Delta \}\), these \(\bm{y}\) leading to \(f_n(\bm{y}) = 0\) are out of our interest, and we simply discard such \(\bm{y}\) and declare an error.
Hence, we need to investigate the set
\begin{equation}
    \mathcal{H}_n(Q_Y) \triangleq \{ \bm{y} \in \mathcal{T}_n (Q_Y) | f_n(\bm{y}) \in [2^{nR}]\},
\end{equation}
particularly, the size of \(\mathcal{H}_n(Q_Y)\).
If the selected \(Q_{\hat{X} | Y}\) satisfies \(I(Q_{Y}, Q_{\hat{X} | Y}) \leq R\), then \(f_n(\bm{y}) = 0\) does not occur and
\begin{equation}
    |\mathcal{H}_n(Q_Y)| = |\mathcal{T}_n(Q_Y)|. \label{eq:achievability/lower_bound_set_size_less_than_R}
\end{equation}
On the other hand, if \(I(Q_{Y}, Q_{\hat{X} | Y}) > R\), then  we have
\begin{equation}
    \mathcal{H}_n(Q_Y) = \bigcup_{i=1}^{2^{nR}} \mathcal{F}_i,
\end{equation}
Due to the partitioning, it holds that
\begin{equation}
    | \mathcal{H}_n(Q_Y) | = \sum_{i=1}^{2^{nR}} |\mathcal{F}_i|. \label{eq:achievability/sum_size}
\end{equation}
Our aim here is to lower bound \(\P \{ \bar{d}(X^n, \hat{X}^n ) \leq \Delta \}\), and for this it is desirable to find a lower bound for \eqref{eq:achievability/sum_size}.
Observe that
\begin{equation}
    \frac{1}{2^{nR}} \sum_{i=1}^{2^{nR}} |\mathcal{F}_i| \geq \frac{1}{M_n}\sum_{i=1}^{M_n} |\mathcal{F}_i|, \label{eq:achievability/lower_bound_size_through_average}
\end{equation}
since in the RHS of \eqref{eq:achievability/lower_bound_size_through_average} we include smaller sets in the average (recall \eqref{eq:achievability/size_decreasing}).
Thus, from \eqref{eq:achievability/lower_bound_size_through_average}, we obtain
\begin{align}
    | \mathcal{H}_n(Q_Y) | & \geq 2^{n(R - \frac{1}{n}\log M_n)} \sum_{i=1}^{M_n} |\mathcal{F}_i| \\
    & \ndot{ = }  2^{n(R - I_Q(Y;\hat{X}) )} |\mathcal{T}_n(Q_Y)|, \label{eq:achievability/lower_bound_set_size_greater_than_R}
\end{align}
where \eqref{eq:achievability/lower_bound_set_size_greater_than_R} is due to \(M_n \ndot{=} 2^{nI_Q(Y;\hat{X})}\) and the partitioning.
Combining \eqref{eq:achievability/lower_bound_set_size_less_than_R} and \eqref{eq:achievability/lower_bound_set_size_greater_than_R}, no matter whether \(I_Q(Y;\hat{X}) > R\) or not, we see that
\begin{align}
    | \mathcal{H}_n(Q_Y) | & \ndot{\geq} 2^{-n | I_Q(Y;\hat{X}) -R |^{+}} |\mathcal{T}_n(Q_Y)| \\
    & \ndot{=} 2^{-n( |I_Q(Y;\hat{X}) -R |^{+} - H(Q_Y) )}. \label{eq:achievability/unified_size_lower_bound}
\end{align}

With the above prelude, we now proceed to the main part of the proof. Under the described coding scheme, we have
\begin{align}
    & \P \{ \bar{d}(\bm{X}, g_n ( f_n ( \bm{Y})) ) \leq \Delta \} \nonumber \\
    & = \! \! \! \! \sum_{Q_Y \in \mathcal{P}_n(\mathcal{Y})} \sum_{\bm{y} \in \mathcal{T}_n(Q_Y)} \! \! \!  \P \{ \bm{Y} = \bm{y}\} \times \P \{ \bar{d}(\bm{X}, g_n ( f_n ( \bm{y})) ) \leq \Delta   \} \nonumber \\
    & \geq \! \! \! \! \sum_{Q_Y \in \mathcal{P}_n(\mathcal{Y})} \sum_{\bm{y} \in \mathcal{H}_n(Q_Y)} \! \! \!  \P \{ \bm{Y} = \bm{y}\} \times \P \{ \bar{d}(\bm{X}, g_n ( f_n ( \bm{y})) ) \leq \Delta   \}. \nonumber
\end{align}
For a given \(\bm{y} \in \mathcal{H}_n(Q_Y)\), let \(\hat{\bm{x}} = g_n(f_n(\bm{y}))\) and hence
\(\hat{P}_{\bm{y} \hat{\bm{x}}} = Q_{Y\hat{X}}\).
It follows that
\begin{align}
    \{\bm{x}  | \bar{d}(\bm{x}, g_n ( f_n ( \bm{y})) ) \leq \Delta  \}  = \! \! \! \! \! \! \bigcup_{ \substack{ Q_{X | Y \hat{X}}: \\ \E_Q[d( X, \hat{X} )] \leq \Delta} }  \! \! \! \! \! \! \! \mathcal{T}_n( Q_{X| Y \hat{X}} | \bm{y}, \hat{\bm{x}} ), \nonumber
\end{align}
where we have \(Q_{XY\hat{X}} = \hat{P}_{\bm{y} \hat{\bm{x}}} Q_{X| Y \hat{X}} = Q_{Y\hat{X}}Q_{X| Y \hat{X}}\).
Therefore, for any \(\bm{y} \in \mathcal{H}_n(Q_Y)\), we obtain
\begin{align}
    & \P \{ \bar{d}(\bm{X}, g_n ( f_n ( \bm{y})) ) \leq \Delta   \} \nonumber \\
    & \geq \max_{ \substack{ Q_{X | Y \hat{X}}: \\ \E_Q[d( X, \hat{X} )] \leq \Delta}  } \P \{ \mathcal{T}_n( Q_{X| Y \hat{X}} | \bm{y}, \hat{\bm{x}} )  \} \\
    & \ndot{\geq} \max_{ \substack{ Q_{X | Y \hat{X}}: \\ \E_Q[d( X, \hat{X} )] \leq \Delta}  } 2^{-nD( Q_{X | Y \hat{X}} \| P_{X|Y} | Q_{Y\hat{X}} ) }.
\end{align}
On the other hand, for every \(\bm{y} \in \mathcal{H}_n(Q_Y) \subset \mathcal{T}_n(Q_Y)\),
\begin{equation}
    \P \{ \bm{Y} = \bm{y}\} = 2^{-n(D(Q_Y \| P_Y) + H( Q_Y ))}.
\end{equation}
Consequently, we deduce that
\begin{align}
    & \sum_{\bm{y} \in \mathcal{H}_n(Q_Y)} \! \! \!   \P \{ \bm{Y} = \bm{y}\} \times \P \{ \bar{d}(\bm{X}, g_n ( f_n ( \bm{y})) ) \leq \Delta   \} \nonumber \\
    & \ndot{\geq} |\mathcal{H}_n(Q_Y)| \times 2^{-n(D(Q_Y \| P_Y) + H( Q_Y ))} \times \nonumber \\
    & \hspace{1.5cm} \max_{ \substack{ Q_{X | Y \hat{X}}: \\ \E_Q[d( X, \hat{X} )] \leq \Delta}  } 2^{-nD( Q_{X | Y \hat{X}} \| P_{X|Y} | Q_{Y\hat{X}} ) } \\
    & \ndot{\geq} \hspace{-10pt} \max_{ \substack{ Q_{X | Y \hat{X}}: \\ \E_Q[d( X, \hat{X} )] \leq \Delta}  } \hspace{-10pt} \exp \{ - n( D(Q_Y \| P_Y) + | I_Q(Y; \hat{X}) -R |^{+} + \nonumber \\
    & \hspace{4cm} \! D( Q_{X | Y \hat{X}} \| P_{X|Y} | Q_{Y\hat{X}} )   ) \} \label{eq:achievability/applying_unified_size_lower_bound} \\
    & = \hspace{-7pt} \max_{ \substack{ Q_{\hat{X} | Y}, Q_{X | Y \hat{X}}: \\ \E_Q[d( X, \hat{X} )] \leq \Delta}  } \hspace{-10pt} \exp \{ - n( D(Q_Y \| P_Y)   +  | I_Q(Y; \hat{X}) -R |^{+} + \nonumber \\
    & \hspace{4cm} D( Q_{X | Y \hat{X}} \| P_{X|Y} | Q_{Y\hat{X}} )  ) \} \label{eq:achievability/optimizing_over_Q_hat_X_Y}\\
    & = \hspace{-7pt} \max_{ \substack{ Q_{X | Y}, Q_{\hat{X} | Y X}: \\ \E_Q[d( X, \hat{X} )] \leq \Delta}  } \hspace{-10pt} \exp \{ - n( D(Q_Y \| P_Y)   + | I_Q(Y; \hat{X}) -R |^{+} + \nonumber \\
    & \hspace{4cm}  D( Q_{X | Y \hat{X}} \| P_{X|Y} | Q_{Y\hat{X}} )   ) \} \label{eq:achievability/equivalent_optimization} \\
    & = \hspace{-7pt} \max_{ \substack{ Q_{X | Y}, Q_{\hat{X} | Y X}: \\ \E_Q[d( X, \hat{X} )] \leq \Delta}  } \hspace{-10pt} \exp \{ - n( D(Q_{XY} \| P_{XY}) \!  + | I_Q(Y; \hat{X}) -R |^{+}  \nonumber \\
    & \hspace{5.2cm}  + I_Q(X;\hat{X} | Y)  ) \}, \label{eq:achievability/mutual_information_identity}
\end{align}
where \eqref{eq:achievability/applying_unified_size_lower_bound} is due to \eqref{eq:achievability/unified_size_lower_bound};
in \eqref{eq:achievability/optimizing_over_Q_hat_X_Y} we assume the best \(Q_{\hat{X} | Y}\) is chosen when constructing the codebooks;
in \eqref{eq:achievability/equivalent_optimization} we notice that optimizing over \( ( Q_{\hat{X} | Y}, Q_{X | Y \hat{X}} )\) is the same as optimizing over \((Q_{X | Y}, Q_{\hat{X} | Y X})\);
in \eqref{eq:achievability/mutual_information_identity} we utilize
\begin{align}
    & D(Q_Y \| P_Y)   +  D( Q_{X | Y \hat{X}} \| P_{X|Y} | Q_{Y\hat{X}} ) \nonumber  \\
    & =D(Q_Y \| P_Y)   +  D( Q_{X | Y \hat{X}} \| Q_{X|Y} | Q_{Y\hat{X}} ) + \nonumber \\
    & \hspace{4cm} D(Q_{Y|X} \| P_{Y|X} | Q_Y) \\
    & = D(Q_{XY} \| P_{XY})  +  I_Q(X;\hat{X} | Y). \label{eq:achievability/mutual_information_identity_presentation}
\end{align}
The remaining steps are to take the maximization over \(Q_Y\) into account and combine the maximization over \((Q_Y, Q_{X|Y})\) into \(Q_{XY}\), which then completes the achievability proof.

\section{Converse}
\label{sec:converse}
In this section, we show that for remote lossy source coding any sequence of coding schemes \((f_n, g_n)\) must satisfy
\begin{equation}
  \P \{ \bar{d}(X^n, \hat{X}^n ) \leq \Delta \}  \ndot{\leq} 2^{-nE(R, \Delta)},
\end{equation}
where \(E(R, \Delta)\) is as stated in \eqref{eq:characterization_reliability_function_remote_lossy_source_coding}.
Before starting, it is instructive to first highlight the main challenge of the converse proof and how we will deal with it.
For any \((f_n, g_n)\), consider a \(\bm{y} \in \mathcal{Y}^n\) and let \(\hat{\bm{x}} = g_n(f_n(\bm{y}))\).
Thus, for this \(\bm{y}\), we have
\begin{align}
    & \P \{ \bar{d}(\bm{X}, g_n ( f_n ( \bm{y})) ) \leq \Delta   \} \nonumber \\
    & \leq \sum_{ \substack{ Q_{X | Y \hat{X}}: \\ \E_Q[d( X, \hat{X} )] \leq \Delta} } \P \{ \mathcal{T}_n( Q_{X| Y \hat{X}} | \bm{y}, \hat{\bm{x}} )  \} \\
    & \ndot{\leq}  \max_{ \substack{ Q_{X | Y \hat{X}}: \\ \E_Q[d( X, \hat{X} )] \leq \Delta}  } 2^{-nD( Q_{X | Y \hat{X}} \| P_{X|Y} | \hat{P}_{\bm{y} \hat{\bm{x}}} ) }. \label{eq:converse/upper_bound_depend_on_type}
\end{align}
As we can see, the upper bound in \eqref{eq:converse/upper_bound_depend_on_type} depends on \(\hat{P}_{\bm{y} \hat{\bm{x}}}\), which may be any joint type since there is no restriction on \((f_n, g_n)\), as long as \(||f_n|| \leq 2^{nR}\).
The main challenge of the converse proof is in connecting \(\hat{P}_{\bm{y} \hat{\bm{x}}}\) to the fact that \(||f_n|| \leq 2^{nR}\).
In the achievability proof, such connection is accomplished through the type covering lemma by fixing \(\hat{P}_{\bm{y} \hat{\bm{x}}} = Q_{Y \hat{X}}\).
For the converse proof, we shall shift our perspective from observed sequence \(\bm{y}\) to codeword \(\hat{\bm{x}}\), and look into how many \(\bm{y}\) satisfy \(\hat{P}_{\bm{y} \hat{\bm{x}}} = Q_{Y \hat{X}}\) for a fixed codeword \(\hat{\bm{x}}\) and joint type \(Q_{Y \hat{X}}\).

Consider any coding scheme \((f_n, g_n)\) with \(||f_n|| \leq 2^{nR}\).
Let \(\mathcal{C}_n = ( \hat{\bm{x}}_1, \hat{\bm{x}}_2, \cdots, \hat{\bm{x}}_{2^{nR}})\) denote the reconstruction codebook of \(g_n\).
For every \(Q_Y \in \mathcal{P}_n(Q_Y)\), we define \(2^{nR}\) subregions
\begin{equation}
    \mathcal{G}_i(Q_Y) \triangleq \{ \bm{y} \in \mathcal{T}_n(Q_Y) | g_n( f_n(\bm{y}) ) = \hat{\bm{x}}_i \} \quad i \in [2^{nR}], \nonumber
\end{equation}
i.e., sequences from the type class \(\mathcal{T}_n(Q_Y)\) that map into \(\hat{\bm{x}}_i\).
It is clear that the \(2^{nR}\) sets \( \mathcal{G}_i(Q_Y)\) form a partition of \(\mathcal{T}_n(Q_Y)\).
For every subregion \(\mathcal{G}_i(Q_Y)\), we further partition it into second-level subsets by defining
\begin{equation}
    \mathcal{G}_i( Q_{Y\hat{X}} | Q_Y) \triangleq \{ \bm{y} \in \mathcal{G}_i(Q_Y) | \hat{P}_{\bm{y} \hat{\bm{x}}_i} = Q_{Y \hat{X}} \},
\end{equation}
i.e., sequences not only from the type class \(\mathcal{T}_n(Q_Y)\) and mapped into \(\hat{\bm{x}}_i\), but also satisfying \(\hat{P}_{\bm{y} \hat{\bm{x}}_i} = Q_{Y \hat{X}} \).
Every \(\mathcal{G}_i(Q_Y)\) can be divided into at most \((n+1)^{\abs{\mathcal{X}} \abs{\mathcal{Y}}}\) such subsets.

Given \(g_n( f_n(\bm{y}) ) = \hat{\bm{x}}_i\), the quantity in \eqref{eq:converse/upper_bound_depend_on_type} only depends on \((\bm{y}, \hat{\bm{x}}_i)\) through their joint type \(\hat{P}_{\bm{y} \hat{\bm{x}}_i}\).
Hence,  for a joint type \(Q_{Y \hat{X}}\), we are interested in the set
\begin{equation}
    \mathcal{G}^{\star}( Q_{Y\hat{X}} | Q_Y ) \triangleq \bigcup_{i=1}^{2^{nR}} \mathcal{G}_i( Q_{Y\hat{X}} | Q_Y), \label{eq:converse/union_subsets}
\end{equation}
which consists of all \(\bm{y} \in \mathcal{T}_n(Q_Y)\) for which \(\hat{P}_{\bm{y} \hat{\bm{x}}_i} = Q_{Y\hat{X}}\), i.e., all \(\bm{y} \in \mathcal{T}_n(Q_Y)\) that results in the same value in \eqref{eq:converse/upper_bound_depend_on_type}.
The task here is to find an upper bound for \(\P \{ \bar{d}(X^n, \hat{X}^n ) \leq \Delta \}\), so now it is useful to explore an upper bound on the size of \eqref{eq:converse/union_subsets}.
First, due to \( \mathcal{G}^{\star}( Q_{Y\hat{X}} | Q_Y ) \subseteq \mathcal{T}_n(Q_Y)\), we see that
\begin{equation}
    | \mathcal{G}^{\star}( Q_{Y\hat{X}} | Q_Y ) | \leq | \mathcal{T}_n(Q_Y)| \ndot{=} 2^{nH(Q_Y)}. \label{eq:converse/upper_bound_type_class}
\end{equation}
On the other hand, since \(\hat{P}_{\bm{y} \hat{\bm{x}}_i} = Q_{Y \hat{X}}\), we observe that
\begin{equation}
    |\mathcal{G}_i( Q_{Y\hat{X}} | Q_Y)| \leq |\mathcal{T}_n(Q_{Y | \hat{X} } | \hat{\bm{x}}_i)| \ndot{=} 2^{nH(Q_{Y | \hat{X}} | Q_{\hat{X}})}.
\end{equation}
Thus, because of the partitioning, we obtain another bound
\begin{align}
    | \mathcal{G}^{\star}( Q_{Y\hat{X}} | Q_Y ) | & = \sum_{i=1}^{2^{nR}} |\mathcal{G}_i( Q_{Y\hat{X}} | Q_Y)|  \\
    & \ndot{\leq} 2^{n( R + H(Q_{Y | \hat{X}} | Q_{\hat{X}}) )} \\
    & = 2^{n(R - I_Q(Y;\hat{X}))} \times 2^{nH(Q_Y)}. \label{eq:converse/upper_bound_union_conditional_type_class}
\end{align}
Combining \eqref{eq:converse/upper_bound_type_class} and \eqref{eq:converse/upper_bound_union_conditional_type_class}, we conclude that
\begin{equation}
    | \mathcal{G}^{\star}( Q_{Y\hat{X}} | Q_Y ) | \ndot{\leq} 2^{- n(|I_Q(Y;\hat{X}) - R |^{+} - H(Q_Y))}, \label{eq:converse/unified_size_upper_bound}
\end{equation}
which always yields the tightest upper bound between the two.

From now on, the converse proof follows in the same fashion as the achievability proof but in the opposite direction.
For any coding scheme \((f_n, g_n)\), we have
\begin{align}
    & \P \{ \bar{d}(\bm{X}, g_n ( f_n ( \bm{Y})) ) \leq \Delta \} \nonumber \\
    & = \! \! \! \! \sum_{Q_Y \in \mathcal{P}_n(\mathcal{Y})} \sum_{\bm{y} \in \mathcal{T}_n(Q_Y)} \! \! \!  \P \{ \bm{Y} = \bm{y}\} \times \P \{ \bar{d}(\bm{X}, g_n ( f_n ( \bm{y})) ) \leq \Delta   \} \nonumber \\
    & \leq \! \! \! \! \sum_{Q_Y \in \mathcal{P}_n(\mathcal{Y})} \sum_{Q_{ \hat{X} | Y}}  \sum_{\bm{y} \in \mathcal{G}^{\star}( Q_{Y\hat{X}} | Q_Y ) } \! \! \!  \P \{ \bm{Y} = \bm{y}\} \times \nonumber \\
    & \hspace{3.5cm} \P \{ \bar{d}(\bm{X}, g_n ( f_n ( \bm{y})) ) \leq \Delta   \}, \label{eq:converse/sum_over_conditional_type}
\end{align}
where in \eqref{eq:converse/sum_over_conditional_type} we have \(Q_{Y\hat{X}} = Q_Y \times Q_{\hat{X} | Y}\).
For any \(\bm{y} \in \mathcal{G}^{\star}( Q_{Y\hat{X}} | Q_Y )\), following from \eqref{eq:converse/upper_bound_depend_on_type}, we see that
\begin{align}
    & \P \{ \bar{d}(\bm{X}, g_n ( f_n ( \bm{y})) ) \leq \Delta   \} \nonumber \\
    & \ndot{\leq}  \max_{ \substack{ Q_{X | Y \hat{X}}: \\ \E_Q[d( X, \hat{X} )] \leq \Delta}  } 2^{-nD( Q_{X | Y \hat{X}} \| P_{X|Y} | Q_{Y \hat{X} } ) }.
\end{align}
On the other hand, for every \(\bm{y} \in \mathcal{G}^{\star}( Q_{Y\hat{X}} | Q_Y ) \subset \mathcal{T}_n(Q_Y)\),
\begin{equation}
    \P \{ \bm{Y} = \bm{y}\} = 2^{-n(D(Q_Y \| P_Y) + H( Q_Y ))}.
\end{equation}
Therefore,
\begin{align}
    & \sum_{Q_{ \hat{X} | Y}}  \sum_{\bm{y} \in \mathcal{G}^{\star}( Q_{Y\hat{X}} | Q_Y ) } \! \! \!  \P \{ \bm{Y} = \bm{y}\} \times \nonumber \\
    & \hspace{3.5cm} \P \{ \bar{d}(\bm{X}, g_n ( f_n ( \bm{y})) ) \leq \Delta   \} \nonumber \\
    & \ndot{\leq} \max_{Q_{\hat{X} | Y}} | \mathcal{G}^{\star}( Q_{Y\hat{X}} | Q_Y ) | \times 2^{-n(D(Q_Y \| P_Y) + H( Q_Y ))} \times \nonumber \\
    & \hspace{2cm} \max_{ \substack{ Q_{X | Y \hat{X}}: \\ \E_Q[d( X, \hat{X} )] \leq \Delta}  } 2^{-nD( Q_{X | Y \hat{X}} \| P_{X|Y} | Q_{Y \hat{X} } ) } \\
    & \ndot{\leq} \hspace{-7pt} \max_{ \substack{ Q_{\hat{X} | Y}, Q_{X | Y \hat{X}}: \\ \E_Q[d( X, \hat{X} )] \leq \Delta}  } \hspace{-10pt} \exp \{ - n( D(Q_Y \| P_Y) \!  +  | I_Q(Y; \hat{X}) -R |^{+} + \nonumber \\
    & \hspace{4cm} D( Q_{X | Y \hat{X}} \| P_{X|Y} | Q_{Y\hat{X}} )    ) \}   \label{eq:converse/use_unified_bound} \\
    & = \hspace{-7pt} \max_{ \substack{ Q_{X | Y}, Q_{\hat{X} | Y X}: \\ \E_Q[d( X, \hat{X} )] \leq \Delta}  } \hspace{-10pt} \exp \{ - n( D(Q_{XY} \| P_{XY})   + | I_Q(Y; \hat{X}) -R |^{+}   \nonumber \\
    & \hspace{5.2cm} + I_Q(X;\hat{X} | Y)   ) \} \label{eq:converse/mutual_information_identity},
\end{align}
where  \eqref{eq:converse/use_unified_bound} is due to \eqref{eq:converse/unified_size_upper_bound};
in \eqref{eq:converse/mutual_information_identity} we notice that optimizing over \( ( Q_{\hat{X} | Y}, Q_{X | Y \hat{X}} )\) is the same as optimizing over \((Q_{X | Y}, Q_{\hat{X} | Y X})\) and also utilize \eqref{eq:achievability/mutual_information_identity_presentation}.
The proof is completed after taking the maximization over \(Q_Y\) into account.

\begin{remark}
If we restrict our attention to the case \(R \geq \log | \hat{ \mathcal{X} }|\), i.e., Corollary \ref{cor:no_compression}, the converse proof can be further simplified.
For any coding scheme \((f_n, g_n)\), consider a \(\bm{y} \in \mathcal{T}_n(Q_Y)\) and let \(\hat{\bm{x}} = g_n(f_n(\bm{y}))\).
Since there is no rate constraint and \(\hat{\bm{x}}\) can be any sequence, for any \(\bm{y} \in \mathcal{T}_n(Q_Y)\), we can directly obtain
\begin{align}
    & \P \{ \bar{d}(\bm{X}, g_n ( f_n ( \bm{y})) ) \leq \Delta   \} \nonumber \\
    & \ndot{\leq}  \max_{ \substack{ Q_{X | Y \hat{X}}: \\ \E_Q[d( X, \hat{X} )] \leq \Delta}  } 2^{-nD( Q_{X | Y \hat{X}} \| P_{X|Y} | \hat{P}_{\bm{y} \hat{\bm{x}}} ) } \\
    & \leq  \max_{ \substack{ Q_{\hat{X} | Y},  Q_{X | Y \hat{X}}: \\ \E_Q[d( X, \hat{X} )] \leq \Delta}  } 2^{-nD( Q_{X | Y \hat{X}} \| P_{X|Y} | Q_{Y \hat{X}} ) },
\end{align}
where \(Q_{Y\hat{X}} = Q_Y \times Q_{\hat{X} | Y}\).
Therefore, we do not need to consider \(\mathcal{G}^{\star}( Q_{Y\hat{X}} | Q_Y )\) and the sum over \(\mathcal{T}_n(Q_Y)\) is sufficient.
\end{remark}

\section{Continuity and Rate Regime}
\label{sec:rate_regime}
This section is dedicated to proving Proposition \ref{prop:continuity_and_rate_regime}.
Considering the identity \(|a|^{+} = \max_{\rho \in [0,1]}\rho a\), we can write
\begin{align}
    & E(R, \Delta) \nonumber \\
    & = \min_{Q_{XY}} \min_{ \substack{ Q_{\hat{X} | X Y}: \\ \E_Q[ d(X, \hat{X}) ] \leq \Delta}  } \! \! \! \! D(Q_{XY} \| P_{XY}) +  I_Q(X;\hat{X} | Y) + \nonumber \\
    & \hspace{5cm} |I_Q(Y;\hat{X}) - R|^{+} \\
    & = \min_{Q_{XY}} \min_{ \substack{ Q_{\hat{X} | X Y}: \\ \E_Q[ d(X, \hat{X}) ] \leq \Delta}  }  \! \! \! \! \max_{\rho \in [0,1]} D(Q_{XY} \| P_{XY}) +  I_Q(X;\hat{X} | Y) + \nonumber \\
    & \hspace{5cm} \rho (I_Q(Y;\hat{X}) - R). \label{eq:rate_regime/exponent_formula}
\end{align}
We establish continuity though the following result.
\begin{lemma}
\label{lem:continuity}
Let \(f(s)\) and \(g(s)\) be real-valued functions defined on the same domain \(\mathcal{S}\). Then,
\begin{align}
    | \sup_{s \in \mathcal{S}} f(s) - \sup_{s \in \mathcal{S}}g(s) | & \leq \sup_{s \in \mathcal{S}} |f(s) - g(s) | \label{eq:continuity/sup} \\
    | \inf_{s \in \mathcal{S}} f(s) - \inf_{s \in \mathcal{S}}g(s) | & \leq \sup_{s \in \mathcal{S}} |f(s) - g(s) |. \label{eq:continuity/inf}
\end{align}
\end{lemma}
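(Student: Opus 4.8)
The plan is to prove the $\sup$ inequality \eqref{eq:continuity/sup} directly, and then deduce \eqref{eq:continuity/inf} by reduction to the $\sup$ case. Both are elementary but worth recording cleanly.

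For \eqref{eq:continuity/sup}, I would fix any $s \in \mathcal{S}$ and start from the pointwise bound $f(s) \le g(s) + |f(s)-g(s)| \le \sup_{t\in\mathcal{S}} g(t) + \sup_{t\in\mathcal{S}}|f(t)-g(t)|$. Taking $\sup$ over $s$ on the left gives $\sup_s f(s) - \sup_s g(s) \le \sup_s |f(s)-g(s)|$. Since the right-hand side is symmetric in $f$ and $g$, swapping their roles yields $\sup_s g(s) - \sup_s f(s) \le \sup_s|f(s)-g(s)|$ as well, and combining the two gives \eqref{eq:continuity/sup}. One should be slightly careful about the degenerate case where a supremum is $+\infty$ (or the two suprema are both $+\infty$), but in our application all quantities are finite, so I would either note that the inequality holds trivially whenever the right-hand side is $+\infty$, or simply state the lemma under the implicit assumption that the relevant suprema are finite.

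For \eqref{eq:continuity/inf}, I would apply \eqref{eq:continuity/sup} to the functions $-f$ and $-g$, using $\inf_s f(s) = -\sup_s (-f(s))$ and $\inf_s g(s) = -\sup_s(-g(s))$, together with $|-f(s)-(-g(s))| = |f(s)-g(s)|$; this immediately converts the $\inf$ statement into the already-proved $\sup$ statement.

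There is essentially no main obstacle here — the lemma is a routine triangle-inequality argument — so the only thing to watch is bookkeeping: making sure the symmetrization step is stated explicitly rather than asserted, and that the $\inf$-to-$\sup$ reduction correctly tracks the sign. The real work is downstream, where this lemma is combined with the minimax representation \eqref{eq:rate_regime/exponent_formula}: there one wants to bound $|E(R,\Delta) - E(R',\Delta)|$ by repeatedly peeling off $\min$'s and $\max$'s via \eqref{eq:continuity/sup}–\eqref{eq:continuity/inf} until one is left with the pointwise difference of the objectives, which is just $\rho|R-R'| \le |R-R'|$, giving $1$-Lipschitz continuity in $R$; but that argument belongs to the proof of Proposition~\ref{prop:continuity_and_rate_regime}, not to the lemma itself.
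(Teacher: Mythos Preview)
Your proposal is correct and matches the paper's approach: the paper states that \eqref{eq:continuity/sup} is ``easy to verify'' (your pointwise-bound-plus-symmetry argument is exactly the standard verification) and that \eqref{eq:continuity/inf} follows by replacing $f,g$ with $-f,-g$, which is precisely your reduction. Your downstream sketch of how the lemma feeds into Proposition~\ref{prop:continuity_and_rate_regime} also matches the paper's three-fold application of the lemma to obtain $|E(R_1,\Delta)-E(R_2,\Delta)|\le|R_1-R_2|$.
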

\begin{proof}
    \eqref{eq:continuity/sup} is easy to verify and \eqref{eq:continuity/inf} is obtained through replacing \(f(s)\) and \(g(s)\) with \(-f(s)\) and \(-g(s)\) in \eqref{eq:continuity/sup}.
\end{proof}
Applying Lemma \ref{lem:continuity} to \eqref{eq:rate_regime/exponent_formula} three times, i.e., applying it in turn to \(Q_{XY}, Q_{\hat{X} | XY}\), and \( \rho \in [0,1]\), we can assert that
\begin{equation}
    |E(R_1, \Delta) - E(R_2, \Delta) | \leq |R_1 - R_2|,
\end{equation}
which implies continuity.
We next investigate the rate regime yielding  \(E(R, \Delta)> 0\).
The following proof is reminiscent of \cite{merhavGeneralizedStochasticLikelihood2017} (see also \cite{scarlettLikelihoodDecoderError2015}).
From \eqref{eq:rate_regime/exponent_formula}, \(E(R, \Delta) > 0\) means that for every \((Q_{XY}, Q_{\hat{X} | XY})\) there exists a \(\rho \in [0,1]\) such that
\begin{equation}
    D(Q_{XY} \| P_{XY}) +  I_Q(X;\hat{X} | Y) + \rho (I_Q(Y;\hat{X}) - R) > 0,
\end{equation}
i.e.,
\begin{equation}
    R < I_Q(Y;\hat{X}) + \frac{1}{\rho} \left(   D(Q_{XY} \| P_{XY}) +  I_Q(X;\hat{X} | Y)   \right). \label{eq:rate_regime/rate_upper_bound}
\end{equation}
Therefore, \(E(R, \Delta) > 0\) is equivalent to
\begin{align}
    & R < \min_{Q_{XY}} \min_{ \substack{ Q_{\hat{X} | X Y}: \\ E_Q[ d(X, \hat{X}) ] \leq \Delta}  }  \! \! \! \! \max_{\rho \in [0,1]} I_Q(Y;\hat{X}) + \nonumber \\
    &  \hspace{2cm} \frac{1}{\rho} \left(   D(Q_{XY} \| P_{XY}) +  I_Q(X;\hat{X} | Y)   \right) \label{eq:rate_regime/divided_by_rho} \\
    & = \min_{ \substack{ P_{\hat{X} | Y}: \\ \E[ d(X, \hat{X}) ] \leq \Delta}  } \! \! \! \! I(Y;\hat{X} ), \label{eq:rate_regime/rate_disortion_function} \\
    & = R_{\mathrm{r}}(P_{XY}, \Delta),
\end{align}
where in \eqref{eq:rate_regime/divided_by_rho} the minimization is because \eqref{eq:rate_regime/rate_upper_bound} holds for every \((Q_{XY}, Q_{\hat{X} | XY})\) and the maximization is due to the existence of \(\rho\); in \eqref{eq:rate_regime/rate_disortion_function} we notice that the minimum in \eqref{eq:rate_regime/divided_by_rho} is achieved if and only if \(D(Q_{XY} \| P_{XY}) +  I_Q(X;\hat{X} | Y) =0\) due to the maximization over \(\rho \in [0,1]\) (in particular, \(\rho = 0\)), which gives rise to \(Q_{XY} = P_{XY}\) and the Markov chain \(X \to Y \to \hat{X}\).

\section{Concluding Remarks}
We investigated the exponential strong converse for remote lossy source coding and characterized the exponent.
We also showed its continuity in \(R\)  through Lemma \ref{lem:continuity} and strict positivity when \(R\) is below the rate-distortion function.
Two previous results are recovered, and our converse proof serves as a simpler alternative proof for \cite[Theorem 1]{kangDeceptionSideInformation2015}.

\section*{Acknowledgment}
This work was partially supported by the European Research Council (ERC) under the ERC Starting Grant N. 101116550 (IT-JCAS).

\atColsEnd{\vskip1pt}

\bibliography{ref}

\begin{thebibliography}{10}
\providecommand{\url}[1]{#1}
\csname url@samestyle\endcsname
\providecommand{\newblock}{\relax}
\providecommand{\bibinfo}[2]{#2}
\providecommand{\BIBentrySTDinterwordspacing}{\spaceskip=0pt\relax}
\providecommand{\BIBentryALTinterwordstretchfactor}{4}
\providecommand{\BIBentryALTinterwordspacing}{\spaceskip=\fontdimen2\font plus
\BIBentryALTinterwordstretchfactor\fontdimen3\font minus \fontdimen4\font\relax}
\providecommand{\BIBforeignlanguage}[2]{{%
\expandafter\ifx\csname l@#1\endcsname\relax
\typeout{** WARNING: IEEEtran.bst: No hyphenation pattern has been}%
\typeout{** loaded for the language `#1'. Using the pattern for}%
\typeout{** the default language instead.}%
\else
\language=\csname l@#1\endcsname
\fi
#2}}
\providecommand{\BIBdecl}{\relax}
\BIBdecl

\bibitem{shannonCodingTheoremsDiscrete1959}
C.~E. Shannon, ``Coding theorems for a discrete source with a fidelity criterion,'' \emph{IRE Int. Conv. Rec.}, vol.~7, pp. 142--163, 1959.

\bibitem{dobrushinInformationTransmissionAdditional1962}
R.~Dobrushin and B.~Tsybakov, ``Information transmission with additional noise,'' \emph{IRE Trans. Inf. Theory}, vol.~8, no.~5, pp. 293--304, 1962.

\bibitem{wolfTransmissionNoisyInformation1970}
J.~Wolf and J.~Ziv, ``Transmission of noisy information to a noisy receiver with minimum distortion,'' \emph{IEEE Trans. Inf. Theory}, vol.~16, no.~4, pp. 406--411, 1970.

\bibitem{bergerRateDistortionTheory1971}
T.~Berger, \emph{Rate {{Distortion Theory}}: {{A Mathematical Basis}} for {{Data Compression}}}.\hskip 1em plus 0.5em minus 0.4em\relax Englewood Cliffs, NJ, USA: Prentice-Hall, 1971.

\bibitem{witsenhausenIndirectRateDistortion1980}
H.~Witsenhausen, ``Indirect rate distortion problems,'' \emph{IEEE Trans. Inf. Theory}, vol.~26, no.~5, pp. 518--521, 1980.

\bibitem{martonErrorExponentSource1974}
K.~Marton, ``Error exponent for source coding with a fidelity criterion,'' \emph{IEEE Trans. Inf. Theory}, vol.~20, no.~2, pp. 197--199, 1974.

\bibitem{weissmanTradeoffsExcesscodelengthExponent2002}
T.~Weissman and N.~Merhav, ``Tradeoffs between the excess-code-length exponent and the excess-distortion exponent in lossy source coding,'' \emph{IEEE Trans. Inf. Theory}, vol.~48, no.~2, pp. 396--415, 2002.

\bibitem{weissmanUniversallyAttainableError2004}
T.~Weissman, ``Universally attainable error exponents for rate-distortion coding of noisy sources,'' \emph{IEEE Trans. Inf. Theory}, vol.~50, no.~6, pp. 1229--1246, 2004.

\bibitem{csiszarInformationTheoryCoding2011}
I.~Csisz{\'a}r and J.~K{\"o}rner, \emph{Information {{Theory}}: {{Coding Theorems}} for {{Discrete Memoryless Systems}}}, 2nd~ed.\hskip 1em plus 0.5em minus 0.4em\relax Cambridge, UK: Cambridge Univ. Press, 2011.

\bibitem{kangDeceptionSideInformation2015}
W.~Kang, D.~Cao, and N.~Liu, ``Deception with side information in biometric authentication systems,'' \emph{IEEE Trans. Inf. Theory}, vol.~61, no.~3, pp. 1344--1350, 2015.

\bibitem{yamamotoSourceCodingTheory1980}
H.~Yamamoto and K.~Itoh, ``Source coding theory for multiterminal communication systems with a remote source,'' \emph{IEICE Trans.}, vol.~63, no.~10, pp. 700--706, 1980.

\bibitem{bergerCEOProblem1996}
T.~Berger, Z.~Zhang, and H.~Viswanathan, ``The {{CEO}} problem,'' \emph{IEEE Trans. Inf. Theory}, vol.~42, no.~3, pp. 887--902, 1996.

\bibitem{oohamaRatedistortionFunctionQuadratic1998}
Y.~Oohama, ``The rate-distortion function for the quadratic {{Gaussian CEO}} problem,'' \emph{IEEE Trans. Inf. Theory}, vol.~44, no.~3, pp. 1057--1070, 1998.

\bibitem{oohamaIndirectDirectGaussian2014}
------, ``Indirect and direct {{Gaussian}} distributed source coding problems,'' \emph{IEEE Trans. Inf. Theory}, vol.~60, no.~12, pp. 7506--7539, 2014.

\bibitem{eswaranRemoteSourceCoding2019}
K.~Eswaran and M.~Gastpar, ``Remote source coding under {{Gaussian}} noise: {{Dueling}} roles of power and entropy power,'' \emph{IEEE Trans. Inf. Theory}, vol.~65, no.~7, pp. 4486--4498, 2019.

\bibitem{kostinaNonasymptoticNoisyLossy2016}
V.~Kostina and S.~Verd{\'u}, ``Nonasymptotic noisy lossy source coding,'' \emph{IEEE Trans. Inf. Theory}, vol.~62, no.~11, pp. 6111--6123, 2016.

\bibitem{stavrouIndirectRateDistortion2023}
P.~A. Stavrou, Y.~Shkel, and M.~Kountouris, ``Indirect rate distortion functions with $f$-separable distortion criterion,'' in \emph{Proc. IEEE ISIT}, Taipei, Taiwan, 2023, pp. 1050--1055.

\bibitem{zaidiInformationBottleneckProblems2020}
A.~Zaidi, I.~{Estella-Aguerri}, and S.~Shamai~Shitz, ``On the information bottleneck problems: {{Models}}, connections, applications and information theoretic views,'' \emph{Entropy}, vol.~22, no.~2, p. 151, 2020.

\bibitem{goldfeldInformationBottleneckProblem2020}
Z.~Goldfeld and Y.~Polyanskiy, ``The information bottleneck problem and its applications in machine learning,'' \emph{IEEE J. Sel. Areas Inf. Theory}, vol.~1, no.~1, pp. 19--38, 2020.

\bibitem{arimotoConverseCodingTheorem1973}
S.~Arimoto, ``On the converse to the coding theorem for discrete memoryless channels,'' \emph{IEEE Trans. Inf. Theory}, vol.~19, no.~3, pp. 357--359, 1973.

\bibitem{oohamaExponentialStrongConverse2018}
Y.~Oohama, ``Exponential strong converse for source coding with side information at the decoder,'' \emph{Entropy}, vol.~20, no.~5, p. 352, 2018.

\bibitem{oohamaExponentialStrongConverse2019}
------, ``Exponential strong converse for one helper source coding problem,'' \emph{Entropy}, vol.~21, no.~6, p. 567, 2019.

\bibitem{moserAdvancedTopicsInformation2022a}
\BIBentryALTinterwordspacing
S.~M. Moser, \emph{Advanced {{Topics}} in {{Information Theory}}}, 5th~ed., ETH Z{\"u}rich, Switzerland, 2022. [Online]. Available: \url{https://moser-isi.ethz.ch/cgi-bin/request_script.cgi?script=atit}
\BIBentrySTDinterwordspacing

\bibitem{merhavGeneralizedStochasticLikelihood2017}
N.~Merhav, ``The generalized stochastic likelihood decoder: {{Random}} coding and expurgated bounds,'' \emph{IEEE Trans. Inf. Theory}, vol.~63, no.~8, pp. 5039--5051, 2017.

\bibitem{scarlettLikelihoodDecoderError2015}
J.~Scarlett, A.~Martinez, and A.~G. i~F{\`a}bregas, ``The likelihood decoder: {{Error}} exponents and mismatch,'' in \emph{Proc. IEEE ISIT}, Hong Kong, 2015, pp. 86--90.

\end{thebibliography}
\bibliographystyle{IEEEtran}

\end{document}